%
\documentclass{llncs}
\usepackage{makeidx}
\usepackage{paralist}
\usepackage{amsmath}
\usepackage{algorithm}
\usepackage{booktabs} 
\usepackage{amssymb}
\usepackage{amsmath}
\usepackage{algorithm}
\usepackage{graphicx}
\usepackage[noend]{algpseudocode} 
\usepackage{xcolor,colortbl}
\usepackage{tabularx}
\usepackage{array}
\newcolumntype{L}[1]{>{\raggedright\let\newline\\\arraybackslash\hspace{0pt}}m{#1}}
\newcolumntype{C}[1]{>{\centering\let\newline\\\arraybackslash\hspace{0pt}}m{#1}}
\newcolumntype{R}[1]{>{\raggedleft\let\newline\\\arraybackslash\hspace{0pt}}m{#1}}
\begin{document}
\frontmatter          
\pagestyle{headings}  
%
\title{Fast Nearest-Neighbor Classification using RNN in Domains with Large Number of Classes}
%
%

\author{
Gautam Singh\inst{1} \and
Gargi Dasgupta\inst{2} \and
Yu Deng\inst{3}
}
\authorrunning{Gautam Singh et al.} 
%
\tocauthor{Gautam Singh, Gargi Dasgupta, Yu Deng}
\institute{
IBM Research-India, New Delhi DL 110070, India,\\
\email{gautamsi@in.ibm.com}
\and
IBM Research-India, Bangalore KA 560045, India,\\
\email{gaargidasgupta@in.ibm.com}
\and
IBM T.J. Watson Research Center, Yorktown Heights, New York NY 10598, US,\\
\email{dengy@us.ibm.com}
}

\maketitle              

\begin{abstract}
In scenarios involving text classification where the number of classes is large (in multiples of 10000s) and training samples for each class are few and often verbose, nearest neighbor methods are effective but very slow in computing a similarity score with training samples of every class. On the other hand, machine learning models are fast at runtime but training them adequately is not feasible using few available training samples per class.
In this paper, we propose a hybrid approach that cascades 1) a fast but less-accurate recurrent neural network (RNN) model and 2) a slow but more-accurate nearest-neighbor model using bag of syntactic features. 

Using the cascaded approach, our experiments, performed on data set from IT support services where customer complaint text needs to be classified to return top-$N$ possible error codes, show that the query-time of the slow system is reduced to $1/6^{th}$ while its accuracy is being improved. Our approach outperforms an LSH-based baseline for query-time reduction.
We also derive a lower bound on the accuracy of the cascaded model in terms of the accuracies of the individual models. In any two-stage approach, choosing the right number of candidates to pass on to the second stage is crucial. We prove a result that aids in choosing this cutoff number for the cascaded system. 



\keywords{RNN, multi-stage retrieval, nearest neighbor}
\end{abstract}

\section{Introduction}

In the spectrum of text classification tasks, number of class labels could be two (binary), more than two (multi-class) or a very large number (more than 10000). In industry, text classification tasks with large number of classes arise naturally. In the domain of IT customer support, a user complaint text is classified to return top-$N$ most likely error codes (from potentially 10000s of options) that product could be having. Another example is from the domain of health insurance where patients inquire  whether their insurance covers a certain diagnosis or treatment. Such patient queries need to be classified into top-$N$ appropriate medical codes to look up against a database and serve an automated response to the patient.

In the given setting, enough training samples are not available to adequately train an effective ML-based model \cite{tsoumakas2008effective}. For dealing with this challenge, one work \cite{dumais2000hierarchical} proposes a hierarchical classification where a hierarchy among class labels is known before-hand. Example, an item is classified into top-level categories (``computer" or ``sports") and then further classified into sub-categories (``computer/hardware", ``computer/software" etc.). In another approach \cite{tsoumakas2008effective}, the hierarchy among classes is not known. Instead, from the ``flat" class labels, a hierarchy is constructed through repeated clustering of the classes. 

In this paper, we adopt a different approach. As the number of class labels grows, the task of text classification starts to increasingly resemble the task of document retrieval (or search). Our approach makes use of this observation. Retrieval methods using sophisticated features are effective but very slow at prediction time. ML models on the other hand are fast but imprecise in the given setting. A common approach in retrieval domain uses two-stages 1) \emph{filtering stage}, a fast, imprecise and inexpensive stage that generates $t$ candidate documents and 2) \emph{ranking stage}, a sophisticated retrieval module that uses complex features (phrase-level or syntactic-level) to re-rank the candidate documents. The two stage retrieval approach mitigates the trade-off between speed and accuracy. By analogy, in this paper, we use statistical, ML based model as the first stage (i.e. candidate generation). This stage is fast but has low accuracy. Next, we use expensive syntactic NLP features and similarity scoring on the candidate classes in the second-stage to generate final top-$N$ predicted classes. This stage is slow but more accurate. 

A number of ML-based models exist for text classification such as regression models \cite{schutze1995comparison}, Bayesian models \cite{mccallum1998comparison} and emerging deep neural networks \cite{rnn,lstm,cho,zhang2015character,zhou2016text}. On the other hand, many approaches use syntactic NLP-based features for text classification based on similarity of nearest neighbor \cite{sidorov,wang}. Another approach uses \emph{word2vec} to incorporate word similarity into nearest-neighbor-based text classification task \cite{xin}. For candidate generation, hashing has been a well-known technique. Hashing techniques can either be \emph{data-agnostic} (such as locality sensitive hashing \cite{andoni2006near,broder2000min}) or \emph{data-dependent} such as learning to hash \cite{wang2017survey}. Candidate generation is also classified as \emph{conjunctive} if the candidates returned contain all the terms in query and \emph{disjunctive} if the candidates contain at least one term from the query \cite{clarke2016assessing,asadi2013effectiveness}.

%

The main contributions of this paper are as follows: 
\begin{asparaenum}
\item We propose a hybrid model for text classification that cascades a fast but less-accurate recurrent neural network model and a slow but more-accurate retrieval model which uses bag of syntactic features. We experimentally show that the query time of the slow-retrieval model is reduced to $1/6^{th}$ after cascading while improving upon its accuracy. (Section \ref{sec:cascaded_approach})
\item We prove a meaningful lower bound on the accuracy of cascaded model in terms of the accuracies of the individual models. The result is generic and can be applied on any cascaded retrieval model. (Section \ref{sec:cascaded_approach_lowerbound})
\item Choosing the number of candidate classes $t$ to pass on to the second stage in any cascaded model is crucial to the performance. If $t$ is too small, the accuracy of the second stage suffers. If $t$ is too large, the speed of the model suffers. To choose this, past works typically perform grid search or test on $t$ values at regular intervals within a desirable range. In this paper, we prove that in order to choose the best $t$, we need to test the accuracy of the cascaded model only on few special values of $t$ rather than all possible values within a desirable range. The result is generic and can be applied to any cascaded retrieval model. (Section \ref{sec:cascaded_approach_pickt})
\item We show that our cascaded model outperforms a baseline for speeding-up the slow retrieval model using locality sensitive hashing (LSH). (Section \ref{sec:experiments})
\end{asparaenum}



\section{Nearest-Neighbor Model using Bag of Syntactic Features}
In this section, we describe the aforementioned slow nearest-neighbor based  model. In this technique, we first perform dependency parsing on the text. A dependency parser, $depparse(s)$ takes a sentence $s$ as input and returns a tree $(V,E)$
\begin{align}
V,E = depparse(s)
\end{align}
where $V$ is a set of all nodes or words in sentence $s$ and $E$ is a set of
all edges or 3-tuples in the tree.
\begin{align}
E = \{(w_1,w_2,r) | \exists \text{ directed edge $r$ from $w_1\rightarrow
w_2$}\}
\end{align}
Any directed edge $(w_1,w_2,r)\in E$ represents some grammatical relation $r$
between the connected words $w_1$ and $w_2$. These relations might have labels
such as \emph{nsubj, dobj, advmod,\ldots etc.} and these represent some grammatical function fulfilled by the connected word pair.

Next, we take word-pairs using each edge in the dependency tree and concatenate their word vectors \cite{mikolov} to get a bag of \emph{syntactic} feature vectors. This is shown in Algorithm \ref{algo:synbigram}. In the algorithm, notice that weights are assigned to the words during concatenation. These weights are based on heuristics and higher weight is given to nouns, adjectives and verbs than other parts of speech.

\begin{algorithm}
\caption{Generate Bag of Syntactic Features}
\begin{algorithmic}[1]
\Procedure{GenerateBagOfSyntacticFeatures}{$s$}
\State $(V,E) \gets \text{dependency parse tree of sentence } s$ 
\State $vectorset \gets \text{empty set}$
\For {$(w_1,w_2,rel) \in E$}
\State $\mathbf{v}_1 \gets$ $word2vec(w_1)$
\State $\mathbf{v}_2 \gets$ $word2vec(w_2)$
\State $\omega_1 \gets$ $wordweight(w_1)$
\State $\omega_2 \gets$ $wordweight(w_2)$
\State $\mathbf{v} \gets (\omega_1\mathbf{v}_1, \omega_2\mathbf{v}_2)$
\State $vectorset.add(\mathbf{v})$
\EndFor
\Return $vectorset$
\EndProcedure
\end{algorithmic}
\label{algo:synbigram}
\end{algorithm}

\subsubsection{Computing similarity of text with a class in training set}
Given a text query $q$, we next compute its similarity with a particular class $b$ in training set. Let the set of texts in the training set corresponding to class $b$ be called $X(b)$. Let the $Z(b)$ denote the set of bags of syntactic features corresponding to each text in $X(b)$. Let $z(q)$ denote bag of syntactic features for the query text.

\begin{align}\label{eq:snbigramscore}
sim(q,b)=\max\limits_{vecset \in Z(b)}\sum\limits_{\mathbf{u}\in z(q)}\max\limits_{\mathbf{v}\in vecset}
cosine(\mathbf{u},
\mathbf{v})
\end{align}
In the above similarity metric, we compute the cosine of feature vectors of the query text and texts corresponding to class $b$ in the training set. The similarity of the best matching text is taken as the similarity score for class $b$. Next, the $N$ highest scoring classes for the given query are returned.

\section{Recurrent Models for Text Classification}
For text classification using recurrent models, text is converted into a sequence of word-vectors and given as input to the model. In recurrent models, the words in text may be processed from left to right. In each iteration, previous hidden state and a word are processed to return a new hidden state. In this paper, we experiment with two kinds of recurrent models 1) GRU \cite{cho} and 2) LSTM \cite{lstm}. We describe below the details only for the GRU model. 

GRU model is parametric and defined by 6 matrices $\mathbf{U}^z$, $\mathbf{U}^r$, $\mathbf{U}^h$, $\mathbf{W}^z$,$\mathbf{W}^r$,$\mathbf{W}^h$ and output matrix $\mathbf{O}$. The recurrence equations are given below.
\subsubsection*{Initialization}
Initialize the hidden state as a zero vector.
\[
\mathbf{h}_0 = \mathbf{0}
\]
\subsubsection*{Iteration}
For $j^{th}$ iteration, $j\in[1,m]$, compute the following
\[
\mathbf{z}_j = \sigma ( \mathbf{U}^z\mathbf{v}_j + \mathbf{W}^z\mathbf{h}_{j-1}) 
\] 
\[
\mathbf{r}_j = \sigma ( \mathbf{U}^r\mathbf{v}_j + \mathbf{W}^r\mathbf{h}_{j-1}) 
\] 
\[
\mathbf{h'}_j = tanh(\mathbf{U}^h\mathbf{v}_j +
\mathbf{W}^h(\mathbf{h}_{j-1}\circ\mathbf{r}_j))
\]
\[
\mathbf{h}_j = (1-\mathbf{z}_j)\circ\mathbf{h'}_j +
\mathbf{z}_j\circ\mathbf{h}_{j-1}
\]

where $m$ is the number of words in text and $\sigma$ refers to the sigmoid function.

\subsubsection*{Termination and Computing Output Probability Distribution}
The latest hidden state $\mathbf{h}_m$ is subjected to a softmax layer to generate an output probability distribution $\mathbf{y} = softmax(\mathbf{O}\mathbf{h}_m)$. We return the classes corresponding to top-$N$ probability values in $\mathbf{y}$.

\section{Cascaded Model for Fast and Accurate Retrieval}
\label{sec:cascaded_approach}
Retrieval model using bag of syntactic features is an example of nearest-neighbor classification. For a given query $q$, this demands that the similarity score be computed with every sample in the training set. On the contrary, if we filter a few candidate classes using the first stage of cascading, the slowness of the retrieval model is overcome. We denote the recurrent machine learning model as $M_1$ and the slow nearest-neighbor classifier as $M_2$.

\subsubsection*{Notations} The correct class to which query $q$ belongs is denoted by $b_q$.  We denote the set of candidate classes returned by the first stage by $T$ and number of such candidates by $t=|T|$. We use $M_1(q,t)$ denote the set of $t$ classes returned by the first stage. We denote the number of classes to be returned by the second stage as $N$. Therefore $t>N$. $M_2^T(q,N)$ denotes the set of $N$ classes returned by the second stage after inspecting the set of classes $T$ returned by the first stage. We use $M_2(q,N)$ to denote the set of $N$ classes returned by the second stage if it were to inspect all classes in the training set without any cascading. We define an empirical accuracy metric over a validation set $S_{valid}$ containing user-queries as follows.
\begin{align}
accuracy = \frac{|\{q \mid q\in S_{test}, b_q\in 
M(q, N)\}|}{|S_{test}|} \approx P(b_q \in M(q,N))
\end{align}
The numerator is the number of queries with correct classes in the top-$N$ suggestions returned by text classifier $M$. The denominator is the total number of queries.

Before describing the proofs, we define two empirical quantities $\rho(t)$ and $\alpha(t)$ associated to the cascaded model which are easy to estimate as follows using a validation set.
\begin{align}
\begin{split}
\label{rho_estm}
\rho(t) &= \frac{|\{ q \mid q \in S_{valid}, b_q \in M_1(q,t) , b_q \in
M_2(q,N)\}|}{|\{ q \mid q \in S_{valid}, b_q \in
M_2(q,N)\}|} \\
&\approx P(b_q \in M_1(q,t) \mid b_q \in M_2(q,N))
\end{split}
\end{align}
\begin{align}
\label{alpha_estm}
\alpha(t) &= \frac{|\{ q \mid q \in S_{valid}, b_q \in M_1(q,t)\}|}{|\{ q \mid q \in S_{valid}\}|}\approx P(b_q \in M_1(q,t))
\end{align}
It is easy to compute $\rho(t)$ as follows. $\alpha(t)$ is analogously computed.
\begin{asparaenum}
\item Run both $M_1$ and $M_2$ on the validation set and store the match scores for each class.
\item For each $t$, find the number of classes which are present both in top-$t$ for $M_1$ and top-$N$ for $M_2$. Also find the number of classes which are present in top-$N$ for $M_2$.
\item Find the ratio of the above two numbers for each $t$.
\end{asparaenum}
In this paper, we assume that empirical estimates of probability values using the validation set are good approximations of their actual values.
\subsection{Lower bound on accuracy of cascaded model}
\label{sec:cascaded_approach_lowerbound}
The idea is to show that the accuracy of the cascaded model is lower bounded by accuracy of the slow-model times $\rho(t)$. This is given in following theorem.
\begin{theorem} 
\label{theorem:th1}
For a cascaded model consisting of stages $M_1$ and $M_2$,
\begin{align}
P(b_q \in M_1(q,t) , b_q \in M_2^{T}(q,N)) \geq \rho(t) P(b_q \in M_2(q,N))
\end{align}
\end{theorem}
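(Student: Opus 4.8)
The plan is to reduce the inequality to a single event inclusion, after which the claimed bound drops out of the definition of conditional probability. The crucial structural fact is a \emph{monotonicity} property of the second stage: the score $sim(q,b)$ of Equation~\eqref{eq:snbigramscore} depends only on the query $q$ and the candidate class $b$, not on which other classes happen to be under consideration. Consequently, if the second stage is restricted from the full training label set to a subset $T$, the rank of any class that survives into $T$ can only improve (or stay the same). I would state this per-class scoring assumption explicitly, since it is exactly what makes the cascade well-behaved.

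First I would make the inclusion precise. Fix a query $q$ and suppose both $b_q \in M_1(q,t)$ and $b_q \in M_2(q,N)$ hold. The first event says $b_q \in T$, where $T = M_1(q,t)$ is precisely the candidate set passed to the second stage. The second event says that, scored against \emph{all} classes, $b_q$ is among the $N$ highest-scoring. Since $T$ is a subset of all classes and $b_q \in T$, the number of classes in $T$ that strictly outscore $b_q$ is no larger than the number among all classes; with any fixed, consistent tie-breaking rule the same holds for ties, so $b_q$ remains among the top-$N$ of $T$, i.e. $b_q \in M_2^{T}(q,N)$. This yields the set inclusion
\begin{align}
\{b_q \in M_1(q,t)\} \cap \{b_q \in M_2(q,N)\} \;\subseteq\; \{b_q \in M_1(q,t)\} \cap \{b_q \in M_2^{T}(q,N)\}.
\end{align}

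Next I would apply monotonicity of probability to both sides and then the chain rule, using the definition of $\rho(t)$ from Equation~\eqref{rho_estm}:
\begin{align}
P\big(b_q \in M_1(q,t),\, b_q \in M_2^{T}(q,N)\big) &\geq P\big(b_q \in M_1(q,t),\, b_q \in M_2(q,N)\big) \nonumber \\
&= P\big(b_q \in M_1(q,t) \mid b_q \in M_2(q,N)\big)\, P\big(b_q \in M_2(q,N)\big) \nonumber \\
&= \rho(t)\, P\big(b_q \in M_2(q,N)\big),
\end{align}
which is exactly the statement of the theorem.

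The only non-bookkeeping step — and hence the main obstacle — is justifying the event inclusion, i.e. arguing that a class which both passes the filter and would have landed in the global top-$N$ still lands in the top-$N$ once the label set is pruned to $T$. This is entirely an artifact of the second stage ranking by a fixed per-class score rather than by a listwise or mutually dependent criterion; I would therefore flag this hypothesis clearly (it holds for the $sim(q,b)$ of Section~2) and remark that the tie-handling caveat is cosmetic. Everything after the inclusion is the definition of conditional probability.
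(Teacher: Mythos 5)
Your proposal is correct and follows essentially the same route as the paper: the event inclusion you establish is exactly the paper's Lemma \ref{lemma:small}, and the remaining steps (monotonicity of probability, then the definition of conditional probability via $\rho(t)$ in Equation \ref{rho_estm}) mirror the paper's proof of Theorem \ref{theorem:th1}. Your version is slightly more careful in that it makes explicit the per-class scoring assumption on $sim(q,b)$ and the tie-breaking caveat, which the paper leaves implicit in its informal ``rank can only improve or stay the same'' argument.
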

In order to prove the above, we go through the following lemma.
\begin{lemma}
\label{lemma:small} 
Let $q$ by any query such that $b_q \in M_1(q,t)$, then
\begin{align}
b_q \in M_2(q,N) \implies b_q \in M_2^T(q,N)
\end{align}
\end{lemma}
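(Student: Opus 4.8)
The plan is to unpack the definitions of $M_2$ and $M_2^T$ and reduce the claim to a simple ``fewer competitors'' observation about rankings. Both $M_2(q,N)$ and $M_2^T(q,N)$ operate in exactly the same way: each class $b$ under consideration is assigned the score $sim(q,b)$ of Equation~\eqref{eq:snbigramscore}, the classes are sorted in decreasing order of this score, and the first $N$ are output. The \emph{only} difference is the pool of classes being sorted: $M_2(q,N)$ sorts the set of \emph{all} classes in the training set, whereas $M_2^T(q,N)$ sorts only the candidate set $T = M_1(q,t)$. First I would fix once and for all a deterministic tie-breaking rule (say, by class index), used identically in both pools, so that ``the rank of a class'' is a well-defined notion in either case.

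Next, suppose $q$ satisfies the two hypotheses of the lemma: $b_q \in M_1(q,t)$, i.e.\ $b_q \in T$, and $b_q \in M_2(q,N)$. Let $r_{\mathrm{all}}$ be the rank of $b_q$ when all classes are sorted, and let $r_T$ be its rank when only the classes of $T$ are sorted; the latter is well defined precisely because $b_q \in T$. The set of classes in $T$ that outrank $b_q$ is contained in the set of \emph{all} classes that outrank $b_q$, since $T$ is a subset of the full class set and the scoring and tie-breaking rules agree on both pools; hence $r_T \le r_{\mathrm{all}}$.

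Finally, from $b_q \in M_2(q,N)$ we have $r_{\mathrm{all}} \le N$, and therefore $r_T \le N$. Since $|T| = t > N$, the second stage has enough candidates to fill all $N$ of its output slots, and $b_q$ — having rank at most $N$ within $T$ — is among them; that is, $b_q \in M_2^T(q,N)$, which is exactly the conclusion.

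The argument is short, and the only point that needs care is the well-definedness of the two rankings when several classes share the same similarity score: one must commit to a single tie-breaking order and apply it in both pools, otherwise the subset inequality $r_T \le r_{\mathrm{all}}$ can fail. I do not foresee any other obstacle; in particular the assumption $t > N$ already made in the paper is exactly what is needed to ensure $M_2^T$ returns $N$ classes and hence that $b_q$'s bound on its rank translates into membership in the output.
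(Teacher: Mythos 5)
Your proof is correct and follows essentially the same route as the paper's: both argue that restricting the second stage to the subset $T$ (which still contains $b_q$) can only remove competitors, so the rank of $b_q$ can only improve or stay the same, and $t>N$ guarantees it remains in the top-$N$ output. Your version is simply a more explicit rendering of this, spelling out the rank comparison and the tie-breaking caveat that the paper leaves implicit.
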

\begin{proof}[of Lemma \ref{lemma:small}] Since $b_q \in M_1(q,t)$, hence the first stage removes only some incorrect classes and not the correct class $b_q$. Now since the correct class is in top-$N$ for $M_2$ without any cascading, hence, after cascading using $M_1$, the candidate classes that $M_2$ inspects contain fewer incorrect classes. Thus, introduction of cascading either improves or maintains the rank of the correct class returned in top-$N$. This gives us the above lemma.
\qed
\end{proof}
\begin{proof}[of Theorem \ref{theorem:th1}]
Using Lemma \ref{lemma:small}, 
\begin{align}
b_q \in M_1(q,t) \cap M_2(q,N) \implies b_q \in M_1(q,t) \cap M_2^T(q,N)
\end{align}
This implies that,
\begin{align}
\begin{split}
P(b_q \in M_1(q,t) \cap M_2^T(q,N)) &\geq P(b_q \in M_1(q,t) \cap M_2(q,N))\\
&= P(b_q \in M_1(q,t), b_q \in M_2(q,N))\\
&= P(b_q \in M_1(q,t) \mid b_q \in M_2(q,N)) P(b_q \in M_2(q,N))\\
&= \rho(t)P(b_q \in M_2(q,N))
\end{split}
\end{align}
\qed
\end{proof}
\subsection{Picking best $t$}
\label{sec:cascaded_approach_pickt}
Given a cascaded model, choosing the number of candidates $t$ to pass on to the second stage is crucial. If $t$ is too small, then accuracy suffers as it becomes more likely that the correct class has not passed the first stage. If $t$ is too large, then the query time suffers. The first stage also acts as an elimination round and large $t$ dilutes this elimination process by crowding out the correct class.

Text classification models used in each stage are typically complex. Studying their combined behavior in a cascaded setting may not be straightforward. Thus, choosing $t$ is a challenge. Typically, the only reliable way to do this is to run the cascaded model on all possible values of $t$ and pick a $t$ which produces the highest accuracy on a validation set within a desirable range of $t$. This process might be time-consuming as the slow model (as a part of cascaded model) needs to be re-run for every $t$ being checked. In the following theorem, we show that not all values of $t$ need to be checked. Given that $\alpha(t)$ has same value for two distinct values of $t$, the theorem shows that choosing the smaller value of $t$ offers at least as much accuracy as choosing the larger one. This implies that we need to check only those values of $t$ where $\alpha(t)$ changes value.

\begin{theorem}
\label{theorem:pickt}
Let $q$ be any query. For $t_1,t_2$ such that $t_2 > t_1$ , if $\alpha(t_1) = \alpha(t_2)$ then
\begin{align}
\begin{split}
\label{theoremequation}
& P(b_q \in M_1(q,t_1) \cap M_2^{T_1}(q,N)) \geq P(b_q \in M_1(q,t_2) \cap M_2^{T_2}(q,N))
\end{split}
\end{align}
In other words, for a given value of $\alpha(t) = \alpha_0$, the accuracy is maximized when
\begin{align}
t & = \text{arg} \min\limits_{\alpha(t) = \alpha_0} \alpha(t)
\end{align}
\end{theorem}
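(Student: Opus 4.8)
\textbf{Proof proposal for Theorem \ref{theorem:pickt}.}
The plan is to reduce the statement to a monotonicity property of the second stage, in the spirit of Lemma \ref{lemma:small}, after first exploiting the hypothesis $\alpha(t_1)=\alpha(t_2)$ to identify two events. The starting observation is that, since $M_1$ returns a ranked list and $M_1(q,t_1)$ is the length-$t_1$ prefix of $M_1(q,t_2)$ when $t_2>t_1$, we always have $M_1(q,t_1)\subseteq M_1(q,t_2)$, hence the containment of events
\begin{align}
\{q : b_q\in M_1(q,t_1)\} \subseteq \{q : b_q\in M_1(q,t_2)\}.
\end{align}
Under the hypothesis $\alpha(t_1)=\alpha(t_2)$, i.e. $P(b_q\in M_1(q,t_1))=P(b_q\in M_1(q,t_2))$, this containment is an equality (on the validation set, equality of counts of a subset forces set equality; in probabilistic terms the two events agree up to a null set). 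So for every relevant query $q$, $b_q\in M_1(q,t_2)$ already implies $b_q\in M_1(q,t_1)$, and in particular $T_1\subseteq T_2$ with $b_q$ lying in $T_1$ whenever it lies in $T_2$.

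Next I would state and prove a generalization of Lemma \ref{lemma:small}: if $T\subseteq T'$ are two candidate sets both containing $b_q$, then $b_q\in M_2^{T'}(q,N)\implies b_q\in M_2^{T}(q,N)$. The argument is exactly the one used for Lemma \ref{lemma:small}: passing from $T'$ to the smaller set $T$ only removes incorrect classes and never removes $b_q$, so the rank of $b_q$ among the inspected classes can only improve or stay the same; if it was within the top $N$ for $T'$, it remains within the top $N$ for $T$. (Lemma \ref{lemma:small} is the special case $T'=\text{all classes}$, for which $M_2^{T'}(q,N)=M_2(q,N)$.)

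Combining the two ingredients: on the common event $E=\{b_q\in M_1(q,t_1)\}=\{b_q\in M_1(q,t_2)\}$ we have $T_1\subseteq T_2$ and $b_q\in T_1$, so the generalized lemma gives $\{b_q\in M_2^{T_2}(q,N)\}\cap E\subseteq \{b_q\in M_2^{T_1}(q,N)\}\cap E$. Therefore
\begin{align}
\{b_q\in M_1(q,t_2)\cap M_2^{T_2}(q,N)\} &= E\cap\{b_q\in M_2^{T_2}(q,N)\} \nonumber\\
&\subseteq E\cap\{b_q\in M_2^{T_1}(q,N)\} = \{b_q\in M_1(q,t_1)\cap M_2^{T_1}(q,N)\},
\end{align}
and taking probabilities yields \eqref{theoremequation}. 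The final $\arg\min$ reformulation is then immediate: among all $t$ with $\alpha(t)=\alpha_0$, the smallest such $t$ dominates all larger ones in cascaded accuracy, so it suffices to test the values of $t$ at which $\alpha(t)$ changes.

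\textbf{Main obstacle.} The only genuinely delicate point is the first step — turning the equality of probabilities $\alpha(t_1)=\alpha(t_2)$ into an (almost sure) equality of the underlying events, rather than merely a containment; everything after that is the rank-monotonicity argument already used for Lemma \ref{lemma:small}. In the empirical reading adopted in the paper this step is essentially bookkeeping, but it should be stated carefully, since it is precisely what licenses applying the lemma with hypothesis $b_q\in M_1(q,t_1)$ to a query for which a priori we only know $b_q\in M_1(q,t_2)$.
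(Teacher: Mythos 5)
Your proposal is correct and follows essentially the same route as the paper: the paper likewise combines the nesting of $M_1$'s top-$t$ lists (its Lemma \ref{lemma:m1ordering}) with $\alpha(t_1)=\alpha(t_2)$ to get equality of the first-stage events, and then argues inline the same rank-monotonicity of $M_2$ under removal of incorrect candidates that you package as a generalized Lemma \ref{lemma:small}. Your explicit statement of that generalized lemma and of the ``equal probability plus containment implies equal events'' step is just a cleaner bookkeeping of the paper's argument, not a different proof.
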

For proof of above theorem, we go through the following lemma.
\begin{lemma}
\label{lemma:m1ordering}
$\forall t_1,t_2$ such that $t_2 > t_1$, 
\begin{align}
b_q \in M_1(q,t_1) \implies b_q \in M_1(q,t_2)
\end{align}
\end{lemma}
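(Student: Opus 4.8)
The plan is to prove the nesting lemma directly from the definition of the first stage $M_1$ as a top-$t$ selector, with no probabilistic content whatsoever --- the statement is really a deterministic fact about ranked lists. Recall that, after the recurrent model processes query $q$, every candidate class is assigned a scalar match score (the corresponding entry of the softmax output $\mathbf{y}$), and $M_1(q,t)$ returns exactly the $t$ classes holding the highest such scores. The single structural fact I would exploit is that top-$t$ selection is monotone, i.e.\ nested, in $t$: raising the cutoff can only add classes to the returned set and never removes one already present.

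Concretely, I would first fix $q$ and let $\pi$ be an ordering of all classes by non-increasing match score, so that position $1$ holds the highest-scoring class, and so on. Under this ordering, $M_1(q,t)$ is by definition the set of classes occupying the first $t$ positions of $\pi$. Since $t_2 > t_1$, the first $t_1$ positions sit inside the first $t_2$ positions, which yields the set inclusion $M_1(q,t_1) \subseteq M_1(q,t_2)$ at once. The claimed implication is then immediate: if $b_q \in M_1(q,t_1)$, the inclusion forces $b_q \in M_1(q,t_2)$, which is exactly the lemma.

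The only delicate point --- and the main obstacle --- is the well-definedness of $\pi$ when two classes tie in match score. With ties, the top-$t$ set is ambiguous, and a tie resolved one way at cutoff $t_1$ and differently at cutoff $t_2$ could break the inclusion. I would close this gap by requiring $M_1$ to apply a single fixed tie-breaking rule, independent of $t$ (for example, ordering tied classes by a fixed class index); this promotes $\pi$ to a genuine total order and the nesting argument goes through verbatim. Since the softmax produces real-valued scores that are generically distinct, ties are a degenerate case anyway, but pinning the rule makes the argument hold unconditionally rather than merely almost surely.
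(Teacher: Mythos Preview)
Your proof is correct and follows the same idea as the paper's own proof, which is simply the one-line observation that if the correct class lies in the top-$t_1$ returned by $M_1$, it also lies in the top-$t_2$ for any $t_2 > t_1$. Your version is more careful in making the ranking $\pi$ explicit and in addressing tie-breaking, but the underlying argument is identical.
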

\begin{proof}[of Lemma \ref{lemma:m1ordering}]
If the correct class is returned in top-$t_1$ by the first stage for a given query, then for $t_2 > t_1$, the correct class is also a part of top-$t_2$ classes returned by the first stage.
\qed
\end{proof}
\begin{proof}[of Theorem \ref{theorem:pickt}]
We start from the condition given in the theorem i.e., $\alpha(t_1) = \alpha(t_2)$ and using Equation \ref{alpha_estm}, we get
\begin{align}
\label{eq:equalprob}
P(b_q \in M_1(q,t_1)) = P(b_q \in M_1(q,t_2))
\end{align}
Using above Equation \ref{eq:equalprob} and Lemma \ref{lemma:m1ordering}, we get the equality of the set of queries for whom the correct class have passed through the first stage.
\begin{align}
\label{eq:equalcandidates}
\{q \mid b_q \in M_1(q,t_1) \} \equiv \{q \mid b_q \in M_1(q,t_2) \}
\end{align}
Now consider the set of queries which are correctly classified in top-$N$ by the cascaded model using $t = t_2$,
\begin{align}
\{q \mid b_q \in M_1(q,t_2) \cap M_2^{T_2}(q,N)\}
\end{align}
Now, when $t_2$ is reduced to $t_1$, we know on one hand that the number of classes passing to the second stage is smaller i.e., $t_1$. On the other hand, we know from set equivalence in Equation \ref{eq:equalcandidates} that the exact same queries contain their correct classes in the candidate classes being passed on. These two observations imply that only incorrect classes have been removed in the first stage while going from $t_2$ to $t_1$. This reduction in number of classes being passed on can either improve or keep same the rank of the correct class returned in the top-$N$ by the second stage in the cascaded setting. Therefore,
\begin{align}
b_q \in M_1(q,t_2) \cap M_2^{T_2}(q,N) \implies b_q \in M_1(q,t_1) \cap M_2^{T_1}(q,N)
\end{align}
This implies that
\begin{align}
P(b_q \in M_1(q,t_1) \cap M_2^{T_1}(q,N)) \geq P(b_q \in M_1(q,t_2) \cap M_2^{T_2}(q,N))
\end{align}
\qed
\end{proof}

\subsection{Baseline for Query Time Improvement}
This section describes the LSH-based baseline for candidate generation. In the training set, for every syntactic feature vector $\mathbf{v}$, the $i^{th}$ bit of the hash code is given as
\begin{align}
\label{eq:lshprojection}
h_i(\mathbf{v}) = sgn(\mathbf{w}_i^T\mathbf{v})
\end{align}
where $\mathbf{w}_i$ are randomly picked. We create a hash-table $\Phi_{class}$ whose indices are hash-codes of syntactic features in the training set and values are the sets of corresponding class labels. We create a similar hash-table $\Phi_{text}$ whose values are texts corresponding to the hash-codes instead of class labels. For candidate generation, we use two implementations of the conjunctive approach 1) \emph{Class-based} where returned candidate classes contain to all hash-codes computed from the query text. 2) \emph{Text-based} where returned candidate classes have at least one text that contains all hash codes computed from the query text.

\section{Experiments and Inferences}
\label{sec:experiments}
In this section, we describe the experiments which demonstrate
performances of our proposed techniques and verify the bounds.
\subsubsection*{Data Set}
Two kinds of documents from the domain of IT support are used to generate data set for our experiments 1) product
reference documents and 2) past problem requests. From 300MB
of product reference documents, we extracted a total of 55K
distinct error codes and a total of 15K distinct error code text descriptions. We combined the error codes corresponding to each of 15K distinct error code descriptions to reduce data sparsity per class and to get 15K error-code classes. From the past problem requests, we extracted 40K problems with known error-code classes. Out of these, 90\% are used for training while remaining is set aside
for validation and testing. Notice that the mean number of texts corresponding to each error code class is approximately 2-3, which is too few for adequate training of statistical ML-based models.\\ 
\begin{table*}[t]
\scriptsize
\centering
\begin{tabular}{L{0.2\textwidth} L{0.8\textwidth}}
\toprule
User Query & Top-$10$ Error Description Suggestions\\
\midrule
getting media err detected on device & system lic detected a program exception, 
a problem occurred during the ipl of a partition ,
partition firmware detected a data storage error ,
tape unit command timeout,
interface error,
\textbf{tape unit detected a read or write error on tape medium},
tape unit is not responding,
an open port was detected on port 0 ,
contact was lost with the device indicated ,
 destroy ipl task\\
\midrule
hmc appears to be down & \textbf{licensed internal code failure on the hardware
management console hmc} , system lic detected a program exception ,
service processor was reset due to kernel panic ,
the communication link between the service processor and the hardware management
console  hmc  failed , platform lic detected an error,
power supply failure,
processor 1 pgood fault  pluggable ,
system power interface firmware  spif  terminated the system because it detected a power fault ,
detected ac loss,
a problem occurred during the ipl of a partition ,
platform lic failure\\
\midrule
failed power supply & a fatal error occurred on power supply 1,
power supply failure,
the power supply fan on un e1 failed , \textbf{detected ac loss},
a fatal error occurred on power supply 2,
power supply non power fault  ps1 ,
the power supply fan on un e2 failed  the power supply should be replaced as
soon as possible , a non fatal error occurred on power supply
1, the power supply fan on un e2 experienced a short stoppage\\
\bottomrule
\end{tabular}
\vspace{8px}
\caption{Examples of user-queries and error-code class descriptions returned by our models with highlighted correct response}
\label{examples_table}
\end{table*} 

\begin{figure*}[t!] 
\centering
\includegraphics[width=1.0\linewidth]{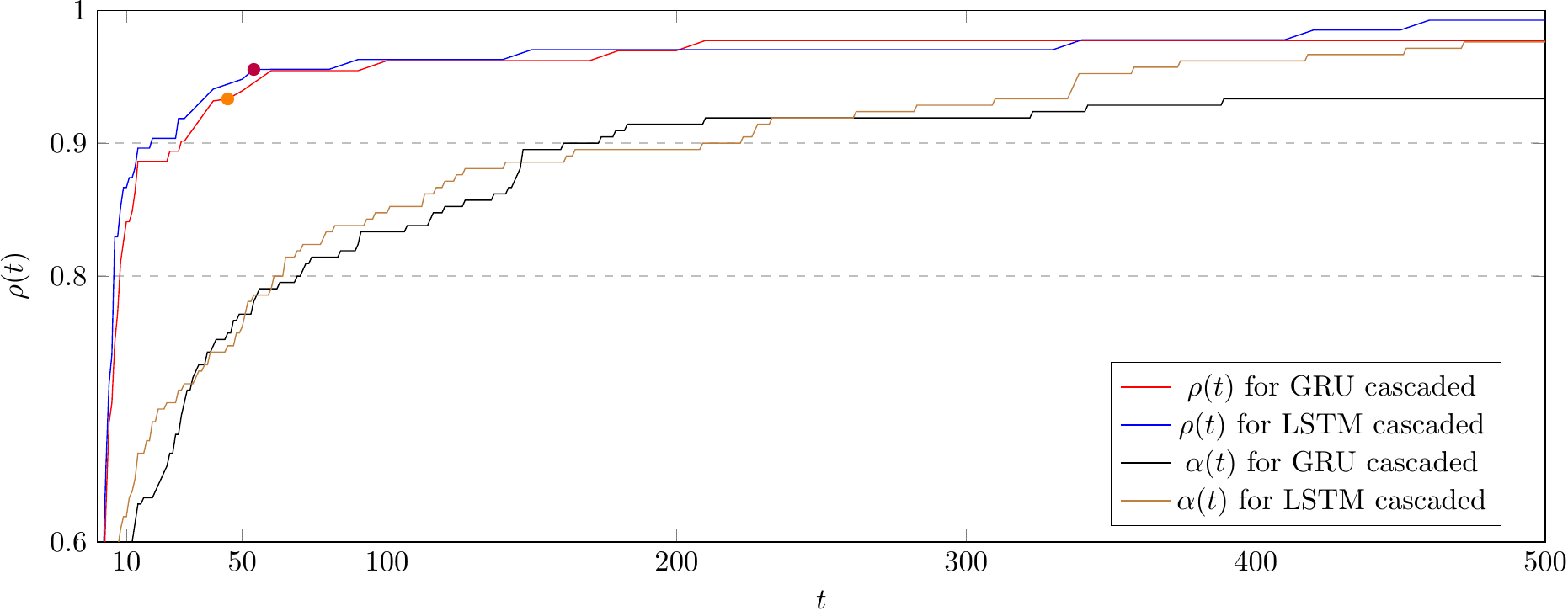}
\caption{Plot showing dependence of $\rho(t)$ and $\alpha(t)$ on $t$ for the cascaded model on the described data set. }
\label{rho_plot}
\end{figure*}

\begin{table}[t!]
\begin{minipage}{.45\linewidth}
\centering
\scriptsize
\begin{tabular}{c c c c c}
\toprule
$M_1$ & $\frac{P(b_q \in M_1(q,N))}{P(b_q \in
M_2(q,N))}$ & $t$  & $t$ values when $\rho$\\
& & crossing & changes value\\
\midrule
GRU & 0.933 & 45 & 45,55,91,179,210\\
LSTM & 0.955 & 54 & 54,82,141,337,452\\
\bottomrule
\end{tabular}
\vspace{8px}
\caption{Computing relevant $t$ values for cascaded models}
\label{tab:alpha_rho_analysis} 
\end{minipage}
\hspace{0.09\linewidth}
\begin{minipage}{.45\linewidth}
\centering
\scriptsize
\begin{tabular}{ccccccc}
\toprule
My\cellcolor{blue!47} & tape \cellcolor{blue!58} & drive \cellcolor{blue!52} & has \cellcolor{blue!27} & been \cellcolor{blue!33} &giving\cellcolor{blue!30} \\ error\cellcolor{blue!27} & once\cellcolor{blue!31} &
every\cellcolor{blue!29} & week\cellcolor{blue!51}\\
\midrule
Need\cellcolor{blue!37}  & to\cellcolor{blue!26}  & have \cellcolor{blue!27}  & adapter \cellcolor{blue!52} & replaced\cellcolor{blue!46} \\
\midrule
Flashing \cellcolor{blue!61}& power\cellcolor{blue!57} & button\cellcolor{blue!55} & and\cellcolor{blue!31} & warning\cellcolor{blue!41} & light\cellcolor{blue!55}\\
\bottomrule
\end{tabular}
\vspace{8px}
\caption{Heat map showing word weights assigned by GRU model to user queries}
\label{tab:gru_wordweights} 
\end{minipage}
\end{table}

\begin{table}[t!]
\begin{minipage}{.45\linewidth}
\centering
\scriptsize
\begin{tabular}{c c c c c c}
\toprule
$M_1$ & $t$ & Accuracy & \multicolumn{3}{c}{Time
Taken (in s)}\\
\cmidrule(l){4-6}
& & & Mean & Min & Max\\
\midrule
LSTM & 54 & 63.33\% & 9.09 & 0.73 & 49.98\\
& 82 & 64.76\% & 12.23 & 1.12 & 63.60\\
& 141 & \textbf{65.23\%} & 14.51 & 1.23 & 75.00\\
& 337 & 64.29\% & 23.35 & 2.09 & 114.4\\
& 452 & 62.86\% & 23.46 & 2.35 & 108.2\\
\midrule
GRU & 45 & 60.95\% & 9.85 & 1.22 & 50.23\\
& 179 & 63.33\% & 18.93 & 2.25 & 89.97\\
& 210 & \textbf{64.29\%} & 20.10 & 2.31 & 93.18\\
& 400 & 63.81\% & 25.89 & 2.51 & 117.1\\
& 500 & 63.33\% & 29.15 & 2.66 & 131.6\\
\bottomrule
\end{tabular}
\vspace{8px}
\caption{Accuracies and CPU times of cascaded model
comprising of syntactic-bigram vector model followed by $M_1$ for
varying $t$ for suggestion of top-10 error-code classes}
\label{tab:cascaded_accuracies} 
\end{minipage}
\hspace{0.09\linewidth}
\begin{minipage}{.45\linewidth}
\centering
\scriptsize
\begin{tabular}{c c c c c}
\toprule
Model & Accuracy & \multicolumn{3}{c}{Time
Taken (in s)}\\
\cmidrule(l){3-5}
& & Mean & Min & Max\\
\midrule
LSTM & 61.43\% & 0.013 &
0.004 & 0.112\\
GRU & 60.00\% & 0.014 &
0.005 & 0.062\\
\midrule
sn-Vectors & 64.29\% & 84.60 & 12.54 & 294.92\\
sn-Bigrams & 63.33\% & 41.42 & 8.97 & 133.51\\
BOW & 43.33\% & 12.80& 7.32 & 28.19\\
\bottomrule
\end{tabular}
\vspace{8px}
\caption{Accuracies and CPU times of various models for suggestion of top-10 error-code classes}
\label{tab:uncascaded_accuracies}
\end{minipage}
\end{table}

\begin{table}[t!]
\centering
\begin{tabular}{c c c c c c}
\toprule
LSH & $P$ & Accuracy & \multicolumn{3}{c}{Time
Taken (in s)}\\
\cmidrule(l){4-6}
Version& & & Mean & Min & Max\\
\midrule
Cluster & 5 & 62.38\% & 35.1 & 0.001 & 142.2\\
based& 10 & 60.47\% &	10.3	& 0.001 &	27.12\\
& 15 &	57.14\% &	5.82	& 0.001	& 19.39\\
& 20	& 56.66\% &	5.79 &	0.001 &	19.75\\
\midrule
Text & 1 &	64.28\% &	46.1 &	0.001 &	194.2\\
based& 3 &	61.90\%	& 13.5 &	0.001	& 41.23\\
& 5	& 55.23\% &	1.31 &	0.001 &	11.30\\
\bottomrule
\end{tabular}
\vspace{8px}
\caption{Baseline accuracies (in top-10) and CPU times of LSH-based implementations (for reducing query time of bag of syntactic features technique) for varying number of permutations $P$.}
\label{tab:lsh_accuracy_table} 
\end{table}

In Figure \ref{rho_plot}, we show the plots of $\rho(t)$ and $\alpha(t)$ for cascaded models having $M_1$ as the GRU model and the LSTM model. Notice that to guarantee the usefulness of the cascaded model, the accuracy of the cascaded model should be at least as much as the less accurate model. The smallest value of $t$ that achieves this can be found by using the lower bound in Theorem \ref{theorem:th1}. Thus,
\begin{align}
\rho(t)P(b_q \in M_2(q,N)) \geq P(b_q \in M_1(q,N))
\end{align}
or,
\begin{align}
\rho(t) \geq \frac{P(b_q \in M_1(q,N))}{P(b_q \in M_2(q,N))}
\end{align}
Thus, as shown in Table \ref{tab:alpha_rho_analysis}, for GRU model, $\rho(t) \geq 0.933$ or $t \geq 45$. Similarly, for LSTM model, $\rho(t) \geq 0.955$ or $t \geq 54$. These thresholds on $t$ are shown in Figure \ref{rho_plot} on the $\rho(t)$ plots.

In Table \ref{tab:cascaded_accuracies}, we show the accuracies and CPU times of the cascaded model for varying $t$ for $M_1$ as the GRU and the LSTM model. Notice that according to Theorem \ref{theorem:pickt}, we only need to check the accuracies for $t$ where $\alpha(t)$ (shown in Figure \ref{rho_plot}) changes value. Therefore Table \ref{tab:cascaded_accuracies} shows accuracies for some of those $t$ values.

On comparing the accuracy of CPU times of the cascaded model (in Table \ref{tab:cascaded_accuracies}) and the bag of syntactic features model (depicted in Table \ref{tab:uncascaded_accuracies} as sn-Vectors), we see that cascading reduces query time to $1/6^{th}$ using LSTM and $1/4^{th}$ using GRU model. Cascading using LSTM model also improves the accuracy.

Table \ref{tab:uncascaded_accuracies} shows accuracy and CPU times of other uncascaded models such as 1) fast, machine learning based LSTM and GRU models, 2) bag of syntactic bigrams which uses exact string match for finding similarities between syntactic-bigrams after lemmatizing the words and 3) the bag of words model.

In Table \ref{tab:lsh_accuracy_table}, we show the results of the two versions of the LSH based baseline. The accuracy and CPU time are shown for varying number of permutations (number of bits in the hash code). Increasing the number of permutations leads to fewer nearest-neighbor candidates which decreases the accuracy and improves query time. Comparing results in Table \ref{tab:cascaded_accuracies} and \ref{tab:lsh_accuracy_table}, we infer that our proposed cascaded model outperforms the described baseline.

\section{Conclusion and Future Work}
We proposed a cascaded model using fast RNN-based text classifiers and slow nearest-neighbor based model relying on sophisticated NLP features. We successfully resolved challenges posed by large number of classes, very few training samples per class and slowness of nearest-neighbor approach. We derived a generic lower bound on the accuracy of a 2-stage cascaded model in terms of accuracies of individual stages. We proved a result that eases the effort involved in finding the appropriate number of candidates to pass on to the second stage. We outperformed an LSH-based baseline for query time reduction.

Some problems that need further work naturally emerge. One is investigating insights when 2-stage cascading is extended to multi-stage. Another is exploring other machine learning models operate in a cascaded setting.

\bibliography{references}
\bibliographystyle{splncs03}
\end{document}